\newtheorem{theorem}{Theorem}
\newtheorem{proposition}[theorem]{Proposition}
\newtheorem{corollary}[theorem]{Corollary}
\newtheorem{definition}[theorem]{Definition}
\newenvironment{proof}{\noindent\textbf{Proof:}\ }{\hfill$\Box$\medskip\par}
{\theorembodyfont{\rmfamily}}
{\theorembodyfont{\rmfamily}}
\title{A Note on the Power of Non-Deterministic Circuits with Gate Restrictions}
\author{Gustav Nordh\thanks{E-mail: {\tt gustav.nordh@gmail.com}}
}
\date{\today}
\begin{document}
\maketitle
\begin{abstract}
We investigate the power of non-deterministic circuits over restricted sets of base gates. We note that the power of non-deterministic circuits exhibit a dichotomy, in the following sense: For weak enough bases, non-deterministic circuits are no more powerful than deterministic circuits, and for the remaining bases, non-deterministic circuits are super polynomial more efficient than deterministic circuits (under the assumption that $P/poly \neq NP/poly$). Moreover, we give a precise characterization of the borderline between the two situations.
\end{abstract}

\section{Introduction}
In this note we are interested in finding evidence in favor of the belief that non-deterministic circuits are more powerful than deterministic circuits, and $P/poly \neq NP/poly$. Except for the well-known result that the polynomial hierarchy collapses if $P/poly = NP/poly$~\cite{KL80}, there seems to be little formal evidence in this direction. For example, we do not know of a function family $\{f_n\}_{n \geq 1}$ having non-deterministic circuit complexity $kn$, and deterministic circuit complexity of at least $(k+\epsilon)n$. 

A natural path for making progress on this question is to prove that non-deterministic circuits over some restricted set of base gates $G$ (i.e., non-deterministic $G$-circuits) are more powerful than deterministic $G$-circuits. Unfortunately, we are not aware of any results in this direction either. On the contrary, for one of the most well investigated restriction on $G$, namely monotone circuits (i.e., $G =\{\land,\lor,0,1\}$), it is well known that non-deterministic $G$-circuits are no more powerful than deterministic $G$-circuits.

Given a finite set of base gates $G$, let $[G]$ denote the set of all functions/gates that can be implemented by circuits over $G$ (i.e., implemented by a $G$-circuit). Sets of the form $[G]$ are called clones, and Post~\cite{post} classified all $[G]$ over the Boolean domain. Sets $[G]$ form a lattice under set inclusion and the cardinality of the lattice is countable infinite. For example, the class of all monotone Boolean functions can be computed by circuits consisting of only $\land$ and $\lor$ gates and constants $0$ and $1$, or in other words, the clone of all monotone Boolean functions is $[\land, \lor, 0, 1]$. For more information on Post's lattice, see~\cite{signew1}.

By making heavy use of Post's lattice, we are able to make the following observations: If the constants $0,1$ are in the base $G$, and $G$ is not a full base (i.e., $[G] \neq [\land,\lor,\neg]$), then non-deterministic $G$-circuits are no more powerful than deterministic $G$-circuits. Furthermore, if the base $G$ is monotone, linear, or self-dual, then non-deterministic $G$-circuits are no more powerful than deterministic $G$-circuits. For the remaining bases $G$, non-deterministic $G$-circuits are super polynomial more efficient than deterministic $G$-circuits (under the assumption that $P/poly \neq NP/poly$). 

\subsection{Preliminaries}
A Boolean circuit is a directed acyclic graph with three types of labeled vertices: sources (in-degree $0$) labeled $x_1,\dots,x_n$, a sink (the output), and vertices with in-degree $k > 0$ are gates labeled by Boolean functions on $k$ inputs.
A non-deterministic circuit has, in addition to the inputs $x = (x_1,\dots,x_n)$, a set of \enquote{non-deterministic} inputs $y=(y_1,\dots,y_m)$. A non-deterministic circuit $C$ accepts input $x$ if there exists $y$ such that the circuit output $1$ on $(x,y)$.  
Let $|C|$ denote the number of gates of a circuit $C$.

A family of non-deterministic circuits $\{C_n\}_{n \geq 1}$, with $C_n$ having $n$ (ordinary) inputs, decide a language $L$ if each $C_n$ decide $L_n$ (i.e., $C_n$ accepts $x$ if and only if $|x| =n$ and $x \in L$). 
The class $NP/poly$ is defined as the class of languages decidable by non-deterministic circuit families $\{C_n\}$, with $|C_n| \leq poly(n)$. Recall that $P/poly$ is the class of languages decidable by (deterministic) circuit families $\{C_n\}$, with $|C_n| \leq poly(n)$.

By a circuit we always mean a deterministic circuit, unless it is explicitly said to be a non-deterministic circuit, or it is clear from the context. We assume all sets of base gates $G$ to be finite.
By abusing notation slightly, if a language $L$ is decidable by deterministic $G$-circuits, we denote this by $L \in [G]$.
\begin{definition}
We say that a set of base gates $G$ lack non-deterministic power, if any language $L \in [G]$ that has non-deterministic $G$-circuit complexity $s(n)$, has deterministic $G$-circuit complexity $O(s(n))$.
\end{definition}

\begin{definition}
We say that a set of base gates $G$ has full non-deterministic power if, under the assumption that $P/poly \neq NP/poly$, there is a language $L \in [G]$ that has polynomial non-deterministic $G$-circuit complexity, but does not have polynomial deterministic $G$-circuit complexity.
\end{definition}

\section{Classification}
\begin{proposition}
Given finite sets of base gates $G_1$ and $G_2$ with $[G_1] = [G_2]$, then $G_1$ lack non-deterministic power if and only if $G_2$ lack non-deterministic power.
\label{prop:clone}
\end{proposition}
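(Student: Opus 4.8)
The plan is to turn the hypothesis $[G_1]=[G_2]$ into a pair of mutual constant-factor simulations between $G_1$-circuits and $G_2$-circuits, and then to push the ``lack of non-deterministic power'' property through these simulations. Since the defining property is symmetric in $G_1$ and $G_2$, it suffices to prove one direction, say that if $G_1$ lacks non-deterministic power then so does $G_2$; the other direction is identical after swapping the roles of the two bases.

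First I would establish the key gadget: each gate of one base can be implemented by a constant-size circuit over the other. Indeed, for every gate $g\in G_2$ we have $g\in[G_2]=[G_1]$, so $g$ is computed by some $G_1$-circuit $C_g$. As $G_2$ is finite, $c=\max_{g\in G_2}|C_g|$ is a constant that does not depend on the input length $n$. Now given any $G_2$-circuit $C$ of size $s$, replace every $G_2$-gate by its corresponding $G_1$-implementation; this yields a $G_1$-circuit computing the same function, of size at most $c\cdot s=O(s)$. The same argument with the roles of $G_1$ and $G_2$ exchanged gives the reverse simulation, so $G_1$- and $G_2$-circuit sizes agree up to a constant factor. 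Crucially, this substitution only replaces gates and leaves the labelling of the sources untouched, so it applies verbatim to non-deterministic circuits: the ordinary inputs $x$ and the non-deterministic inputs $y$ are preserved, and since the overall computed function is unchanged the acceptance condition ``there exists $y$ such that the output is $1$'' transfers as well. Finally, because $[G_1]=[G_2]$ also as sets of languages, we have $L\in[G_2]$ if and only if $L\in[G_1]$.

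With the simulations in hand the transfer is a short chain of inequalities. Assume $G_1$ lacks non-deterministic power and let $L\in[G_2]$ have non-deterministic $G_2$-circuit complexity $s(n)$. By the simulation, $L$ has non-deterministic $G_1$-circuit complexity $O(s(n))$; since $G_1$ lacks non-deterministic power, $L$ then has deterministic $G_1$-circuit complexity $O(s(n))$; and by the reverse simulation this gives deterministic $G_2$-circuit complexity $O(s(n))$. Hence $G_2$ lacks non-deterministic power, and the symmetric argument completes the equivalence. I do not anticipate a serious obstacle here: the only points demanding care are to check that the constant $c$ is genuinely independent of $n$ — which is precisely why finiteness of the bases is assumed — and that substituting gadgets for gates never disturbs the non-deterministic inputs, so that deterministic and non-deterministic complexities are both preserved up to the same constant factor.
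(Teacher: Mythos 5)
Your proposal is correct and follows essentially the same route as the paper: both arguments replace each gate of one base by a constant-size implementation over the other (the constant existing because the bases are finite), observe that this substitution preserves non-deterministic inputs and hence applies to non-deterministic circuits, and then chain the resulting constant-factor simulations. You merely spell out the final chain of inequalities more explicitly than the paper does.
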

\begin{proof}
If $G_1 \subseteq [G_2]$, then every gate $g(x_1,\dots,x_k) \in G_1$ has an implementation of size $c^k$ using gates from $G_2$. Hence, we can convert any $G_1$-circuit into an equivalent $G_2$-circuit without blowing up the size more than a constant factor. Similarly, if $G_2 \subseteq [G_1]$, then any $G_2$-circuit can be converted to an equivalent $G_1$-circuit without increasing the size more than a constant factor. The same holds for non-deterministic circuits, and the result follows.
\end{proof}

\begin{corollary}
Given finite sets of base gates $G_1$ and $G_2$ with $[G_1] = [G_2]$, then $G_1$ has full non-deterministic power if and only if $G_2$ has full non-deterministic power.
\end{corollary}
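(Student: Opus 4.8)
The plan is to reuse the constant-factor simulation established in the proof of Proposition~\ref{prop:clone}, together with the trivial observation that a constant-factor blow-up in circuit size cannot change whether a circuit family is of polynomial size. Since $[G_1] = [G_2]$, we have both $G_1 \subseteq [G_2]$ and $G_2 \subseteq [G_1]$, so any $G_1$-circuit (deterministic or non-deterministic) can be transformed into an equivalent $G_2$-circuit of size at most a constant factor larger, and symmetrically in the other direction.

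Assume without loss of generality that $G_1$ has full non-deterministic power; the converse then follows by swapping the roles of $G_1$ and $G_2$. To show $G_2$ has full non-deterministic power I would work under the hypothesis $P/poly \neq NP/poly$ and exhibit a witnessing language. Since $G_1$ has full non-deterministic power, this hypothesis supplies a language $L \in [G_1]$ with polynomial non-deterministic $G_1$-circuit complexity but super-polynomial deterministic $G_1$-circuit complexity. Because $[G_1] = [G_2]$, the same language satisfies $L \in [G_2]$, so $L$ is an admissible candidate for witnessing full non-deterministic power over $G_2$.

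It then remains to verify the two size bounds for $L$ over $G_2$. The polynomial non-deterministic $G_1$-circuits for $L$ translate, via the simulation above, into non-deterministic $G_2$-circuits of size only a constant factor larger, so $L$ retains polynomial non-deterministic $G_2$-circuit complexity. For the deterministic lower bound I would argue by contraposition: if $L$ had polynomial deterministic $G_2$-circuits, then the reverse simulation would yield polynomial deterministic $G_1$-circuits, contradicting the super-polynomial deterministic $G_1$-circuit complexity of $L$. Hence $L$ also has super-polynomial deterministic $G_2$-circuit complexity, and $L$ witnesses that $G_2$ has full non-deterministic power.

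I do not expect a genuine obstacle here; the single point needing care is that the simulation preserves acceptance in the non-deterministic setting, where the non-deterministic inputs $y$ are passed through unchanged while each gate is replaced by an equivalent $G_2$-sub-circuit. This is exactly the content already asserted for non-deterministic circuits in Proposition~\ref{prop:clone}, so the corollary is essentially immediate: clone equality forces mutual polynomial-size simulatability in both the deterministic and non-deterministic regimes, and therefore a single hard language serves as a witness for both bases simultaneously.
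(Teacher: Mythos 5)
Your proposal is correct and follows exactly the route the paper intends: the paper's proof is just the one-line citation ``By the proof of Proposition~\ref{prop:clone},'' and your write-up spells out precisely that argument, transferring a single witnessing language between the two bases via the mutual constant-factor simulations. No gaps; you have simply made explicit what the paper leaves implicit.
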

\begin{proof}
By the proof of Proposition~\ref{prop:clone}.
\end{proof}

\begin{definition}
A Boolean function $f(x_1,\dots,x_n)$ is said to be self-dual if $f(x_1,\dots,x_n) = \overline{f(\overline{x_1},\dots,\overline{x_n})}$ for all $x_1,\dots,x_n \in \{0,1\}$, where $\overline{0}=1$ and $\overline{1}=0$.
\end{definition}
From Post's classification of Boolean clones~\cite{post} we know that the function $d(x_1,x_2,x_3) = (x_1 \land \neg x_2) \lor (\neg x_2 \land \neg x_3) \lor (\neg x_3 \land x_1)$ generates the clone of all self-dual Boolean functions. In other words, any self-dual Boolean function can be computed by a circuit consisting of $d(x_1,x_2,x_3)$-gates (from now on referred to as $d$-gates).
\begin{definition}
A set of base gates $G$ is said to be
\begin{itemize}
\item monotone if $G \subseteq [\land,\lor, 0, 1]$,
\item linear if $G \subseteq [\oplus, 1]$, and
\item self-dual if $G \subseteq [d(x_1,x_2,x_3)]$.
\end{itemize}
\end{definition}

\begin{proposition}
If $G$ is self-dual, then $G$ lacks non-deterministic power.
\end{proposition}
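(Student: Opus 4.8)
The plan is to exploit that every gate in a self-dual base computes a self-dual function and that self-duality is preserved under composition, so that \emph{every} function computed by a $G$-circuit is self-dual. In particular, the function $g(x,y)$ computed by a non-deterministic $G$-circuit $C$, viewed as a function of all of its inputs $(x_1,\dots,x_n,y_1,\dots,y_m)$ jointly, is self-dual. Moreover, since the definition of ``lacks non-deterministic power'' hands us the hypothesis $L\in[G]$, and $G$ self-dual means $[G]$ consists of self-dual functions, the function $f_n$ decided by $C_n$ on length-$n$ inputs is itself self-dual. The entire argument rests on combining these two self-dualities; note that the existential projection $f(x)=\bigvee_y g(x,y)$ is in general not self-dual, so the hypothesis $L\in[G]$ is genuinely needed.

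First I would prove the structural heart of the argument: if $g(x,y)$ is self-dual and $f(x)=\bigvee_y g(x,y)$ is self-dual, then for every fixed $x$ the slice $g(x,\cdot)$ is constant, equal to $f(x)$. Writing $f(\bar x)=\bigvee_y g(\bar x,y)$ and applying self-duality of $g$ in the form $g(\bar x,y)=\overline{g(x,\bar y)}$, and then reindexing $y'=\bar y$, gives $f(\bar x)=\bigvee_{y'}\overline{g(x,y')}=\overline{\bigwedge_{y'}g(x,y')}$. Self-duality of $f$ then yields $f(x)=\overline{f(\bar x)}=\bigwedge_{y'}g(x,y')$. Comparing this with $f(x)=\bigvee_y g(x,y)$ forces $\bigvee_y g(x,y)=\bigwedge_y g(x,y)$, i.e. the slice $g(x,\cdot)$ takes a single value, which is exactly $f(x)$.

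Next I would convert this into a circuit transformation. Since $g(x,y)=f(x)$ for every $y$, we may replace the non-deterministic inputs by any functions of $x$ whatsoever without changing the output: $g(x,h(x))=f(x)$ for all $x$ and all $h$. The step I expect to be the main obstacle is that we are \emph{not} allowed to feed the circuit Boolean constants, because constants are not self-dual and hence are not computable in a self-dual base $G$; a naive ``hardwire $y:=0$'' would leave the clone and produce an object that is no longer a $G$-circuit. The resolution, which is the real content beyond the routine algebra, is to route each non-deterministic input wire $y_i$ to an already-present ordinary input, say $x_1$ (which exists since $n\ge 1$). This is sound precisely because the slice is \emph{genuinely} constant, so the particular binary string $(x_1,\dots,x_1)$ that this assigns for any given $x$ is irrelevant to the value computed.

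Finally I would observe that the resulting object $C_n'(x)=C_n(x,(x_1,\dots,x_1))$ is a deterministic $G$-circuit: every gate is still a $G$-gate and only input wires have been redirected, so $|C_n'|\le |C_n|\le s(n)$, and $C_n'$ computes $f_n$ by the preceding paragraph. Hence $L$ has deterministic $G$-circuit complexity $O(s(n))$, so $G$ lacks non-deterministic power. (The case $m=0$ is vacuous, as $C$ is then already deterministic.)
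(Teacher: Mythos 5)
Your proposal is correct and follows essentially the same route as the paper: both establish that self-duality of the circuit (as a function of $(x,y)$ jointly) together with self-duality of the computed function $f$ forces each slice $C(x,\cdot)$ to be constant equal to $f(x)$, and both then replace the non-deterministic inputs by the ordinary input $x_1$ precisely because constants are unavailable in a self-dual base. Your De Morgan computation is just an algebraic rephrasing of the paper's argument by contradiction, and your explicit handling of the constant-substitution pitfall matches the paper's remark.
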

\begin{proof}
Given a non-deterministic $d$-circuit $C(x,y)$ with $x=(x_1,\dots,x_n)$ and $y=(y_1\dots,y_m)$ computing $f$, recall that $f(x) = 1$ if and only if there is a $y$ such that $C(x,y) = 1$. Assume there is $y$ and $y'$ such that $C(x,y) = 1$ and $C(x,y') = 0$, then $C(\overline{x},\overline{y'}) = 1$, because $C$ consists of $d$-gates which are self-dual. Hence, $f(x) = 1 = f(\overline{x})$ which is impossible since $f$ is self-dual. Thus, if $f(x) = 1$, then $C(x,y) = 1$ for all $y$. To construct the equivalent deterministic $d$-circuit $C'(x)$ we replace all $y_i$ inputs in $C(x,y)$ with $x_1$ (i.e., we replace each non-deterministic variables by the ordinary variable $x_1$). Note that the more natural transformation of replacing the $y_i$'s by constants, does not work, since the resulting circuit is then not necessarily a $d$-circuit.
\end{proof}

\begin{proposition}
If $G$ is monotone, then $G$ lacks non-deterministic power.
\end{proposition}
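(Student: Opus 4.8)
The plan is to exploit the single structural fact that every gate of a monotone base is a monotone function, so that a non-deterministic $G$-circuit $C(x,y)$ computes a function that is monotone jointly in $x$ and $y$. If $C$ computes $f$, i.e.\ $f(x)=1$ iff there is a $y$ with $C(x,y)=1$, then monotonicity in the $y$-block immediately gives $f(x)=C(x,\mathbf 1)$ with $\mathbf 1=(1,\dots,1)$: any witness $y$ satisfies $C(x,y)\le C(x,\mathbf 1)$, so a witness exists iff $C(x,\mathbf 1)=1$. The whole problem therefore reduces to realizing the substitution $y\mapsto\mathbf 1$ inside the $G$-circuit framework, that is, to feeding the value $1$ into the former non-deterministic inputs while keeping the circuit a $G$-circuit of size $O(s(n))$.

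First I would note that the substitution is robust: by monotonicity in $y$, replacing each $y_i$ by the value of \emph{any} $G$-computable term $\sigma(x)$ still computes $f$, provided only that $\sigma(x)=1$ on every accepting input (on rejecting inputs $C(x,\cdot)\le C(x,\mathbf 1)=0$ forces the output to $0$ regardless of $\sigma$). So it suffices to exhibit, for each monotone base, a cheap $\sigma\in[G]$ with $f\le\sigma$ that can be wired into all $y_i$. I would then split on whether $G$ can produce the constant $1$. If some gate $g\in G$ is not $0$-preserving, then since $g$ is monotone the unary function $g(x_1,\dots,x_1)$ has $g(0,\dots,0)=1$ and hence, being monotone, is the constant $1$; a single such gate realizes $\sigma\equiv 1$, so wiring it to the $y_i$ costs one extra gate and gives size $s(n)+O(1)$.

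The remaining case, where every gate of $G$ is $0$-preserving, is where I expect the real work, since then the constant $1$ is provably not in $[G]$ and the naive substitution is illegal (exactly the phenomenon flagged in the self-dual proof). Now $f$ is itself monotone and $0$-preserving. If $\lor\in[G]$ I would take $\sigma(x)=\bigvee_{j\in R}x_j$, the disjunction over the input variables actually read by $C$: when some read variable is $1$ we get $\sigma=1$, and when all read variables are $0$ both $f$ and the substituted circuit vanish by $0$-preservation, so correctness holds, and since $G$ is finite of bounded arity we have $|R|=O(s(n))$ and the added OR-tree has size $O(s(n))$. If $\lor\notin[G]$, Post's lattice leaves only conjunction/projection clones below the monotone $0$-preserving clone, so $f$ is either the constant $0$ (then $0\in[G]$ and we output it with one gate) or a nonempty conjunction $\bigwedge_{i\in S}x_i$; choosing any $i_0\in S$ gives a necessary variable with $f\le x_{i_0}$, and the substitution $y_i\mapsto x_{i_0}$ is a mere relabeling of inputs, adding no gates and keeping the circuit a $G$-circuit.

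The main obstacle is thus not the monotonicity argument, which is a single line, but handling the $0$-preserving bases that cannot express the constant $1$: one must verify that a suitable small replacement term always lives in $[G]$, and the cleanest route I see is to run through the monotone clones of Post's lattice, using $\bigvee$ when it is available and a single necessary variable otherwise. A secondary point to keep honest is the size bound: picking $\sigma$ only from the variables that $C$ actually reads keeps the overhead $O(s(n))$ rather than $O(s(n)+n)$.
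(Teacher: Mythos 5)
Your core step --- $f(x)=C(x,\mathbf{1})$ by monotonicity in the $y$-block --- is exactly the paper's proof, which consists of that single observation plus the instruction to substitute the constant $1$ for the $y$-inputs. Your additional worry, that the constant $1$ need not be realizable in a $G$-circuit when $G$ is merely contained in $[\land,\lor,0,1]$, is legitimate (it is precisely the issue the paper itself flags in the self-dual case), and your first two cases handle it correctly: a non-$0$-preserving monotone gate yields the constant $1$ by identifying its inputs, and when $\lor\in[G]$ the disjunction of the read variables works, because $f$ lies in the $0$-preserving clone $[G]$ and depends only on the read variables, so both $f$ and the substituted circuit vanish when those variables are all $0$.

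The gap is in the last case. It is not true that the only monotone, $0$-preserving clones without $\lor$ are conjunction/projection clones. Post's lattice also contains (i) the clones $S_{10}=[x\land(y\lor z)]$ and $S_{10}^k$, whose members need not be conjunctions --- your fix happens to survive there, since these functions are $1$-separating and hence each is dominated by a single variable, but not for the reason you give --- and, fatally, (ii) the clone $D_2$ of monotone self-dual functions, generated by the majority gate. For $[G]=D_2$ your toolkit has nothing to offer: $\mathrm{maj}(x_1,x_2,x_3)$ is not below any projection (e.g.\ $\mathrm{maj}(1,1,0)=1$), $\lor\notin D_2$, $1\notin D_2$, and in fact no $\sigma\in D_2$ with $f\le\sigma$ exists other than $f$ itself (for self-dual $f$ and $\sigma$, $f\le\sigma$ forces $\sigma\le f$ by dualizing), which is circular. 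To cover $D_2$ you need a different idea, namely the one in the paper's self-dual proposition: when $C$ is built from self-dual gates and computes a self-dual $f$, one shows $C(x,y)=f(x)$ for \emph{every} $y$, so the $y$-inputs may be rewired to $x_1$ with no correctness condition on the substitute. With that case added, your argument closes; as written it does not cover all monotone bases.
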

\begin{proof}
The result follows from the fact that a non-deterministic monotone $G$-circuit $C(x,y)$ outputs $1$ on  input $x,y$ if and only if $C(x,1) = 1$. That is, given a non-deterministic monotone $G$-circuit $C(x,y)$ we can construct an equivalent deterministic monotone $G$-circuit, without increasing the size, by replacing all $y$ variables with the constant $1$.
\end{proof}

\begin{proposition}
If $G$ is linear, then $G$ lacks non-deterministic power.
\end{proposition}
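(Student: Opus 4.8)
The plan is to mirror the structure of the self-dual and monotone cases, but to exploit the algebraic rigidity of affine functions. Since $G$ is linear, $G \subseteq [\oplus,1]$, and every function computed by a $G$-circuit is affine; in particular the non-deterministic circuit $C(x,y)$ deciding our language $L$ computes a single affine function
\[
g(x,y) = c_0 \oplus \bigoplus_{i=1}^{n} a_i x_i \oplus \bigoplus_{j=1}^{m} b_j y_j, \qquad a_i, b_j, c_0 \in \{0,1\},
\]
so that $f(x) = 1$ iff there exists $y$ with $g(x,y)=1$. First I would record (by a routine induction on the gates, using that affine functions are closed under composition) that $g$ must indeed have this form. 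The decisive observation is then that the existential projection of a single affine function is extremely simple: its behaviour is governed entirely by whether the $y$-part $\bigoplus_j b_j y_j$ is identically zero.

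Second, I would split into two cases according to the coefficients $b_j$. If some $b_j = 1$, then $\bigoplus_j b_j y_j$ is a nonconstant linear form and takes both values as $y$ ranges, so for every $x$ some $y$ satisfies $g(x,y) = 1$; hence $f \equiv 1$. Since $L = f \in [G]$ by hypothesis, the constant-$1$ function lies in $[G]$ and is therefore computed by a fixed deterministic $G$-circuit of size $O(1)$, which (after padding with unused input nodes to the correct arity) is the desired circuit of size $O(s(n))$. If instead every $b_j = 0$, then $g$ does not depend on $y$ at all, and $f(x) = c_0 \oplus \bigoplus_i a_i x_i$ is affine. Here I would imitate the self-dual proof and form $C'(x)$ from $C$ by relabeling every non-deterministic input $y_j$ as the ordinary input $x_1$; this leaves the gates untouched, so $C'$ is a deterministic $G$-circuit with $|C'| = |C| = s(n)$, and it computes $g(x, x_1, \dots, x_1) = c_0 \oplus \bigoplus_i a_i x_i \oplus \bigl(\bigoplus_j b_j\bigr) x_1 = f(x)$, the final term vanishing because all $b_j = 0$.

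The step I expect to be the main obstacle is exactly the one flagged in the self-dual proof: eliminating non-determinism without leaving the base $G$. The naive move of substituting constants for the $y_j$ can produce constant inputs that are not expressible in $G$ (for instance when $[G] = [\oplus]$ does not contain $1$), so I must substitute an existing variable instead, and the computation above shows that substituting $x_1$ is harmless precisely in the case where the $y$-part is zero. The constant-$1$ case requires its own, slightly delicate argument, since one cannot in general realize the constant $1$ by substitution within a linear base; it is available only because the hypothesis $L \in [G]$ forces $1 \in [G]$. Apart from these care points, the remaining verifications (closure of affine functions under the circuit operations, and the arithmetic of the substituted function) are routine.
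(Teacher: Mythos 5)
Your proof is correct, but it takes a genuinely different route from the paper's. The paper's entire argument is one sentence: every function in a linear clone is affine and hence computable by a deterministic $G$-circuit of size $O(n)$, so the deterministic complexity is small irrespective of what the non-deterministic circuit looks like. (Strictly, to get $O(s(n))$ rather than $O(n)$ one should add that the non-deterministic circuit must already touch every relevant input variable, so $s(n)$ is at least linear in the number of relevant variables; the paper leaves this implicit.) You instead transform the given non-deterministic circuit $C(x,y)$ directly: you observe that it computes a single affine form $g(x,y)$, and that the existential projection of an affine form is either identically $1$ (when some $y_j$ has a nonzero coefficient) or independent of $y$ (when none does), handling the first case via $1\in[G]$ forced by $L\in[G]$ and the second by the same relabel-$y_j$-as-$x_1$ trick used in the paper's self-dual proof. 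Your version buys a circuit of size exactly $|C|$ (or $O(1)$), so the $O(s(n))$ bound is met literally without any counting of relevant variables, and it isolates the structural reason nondeterminism is useless over affine bases; the paper's version buys brevity by exploiting that linear clones contain only functions that are trivially cheap to compute in the first place. Both are sound.
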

\begin{proof}
Any function $L \in [G]$ for a linear base $G$ can be computed by a deterministic $G$-circuit of size $O(n)$.
\end{proof}

\begin{proposition}
Any $G$ such that $\{x \land (y \lor \neg z)\} \subseteq [G]$ has full non-deterministic power. 
\end{proposition}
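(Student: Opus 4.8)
The plan is to reduce everything to the single generator $h(x,y,z)=x\land(y\lor\neg z)$. Since $\{x\land(y\lor\neg z)\}\subseteq[G]$, the conversion argument of Proposition~\ref{prop:clone} lets me replace each $h$-gate by a constant-size $G$-gadget, so it suffices to exhibit a language with polynomial non-deterministic $h$-circuits, with \emph{some} (possibly enormous) deterministic $h$-circuit, and with no polynomial deterministic circuit at all. I would first record the two gadgets that make $h$ almost universal: $h(x,y,x)=x\land y$ gives genuine conjunction, and $h(1,0,z)=\neg z$ gives negation once the constants $1,0$ are available, whence $\lor$ follows by De~Morgan; thus $[h,0,1]$ is the full basis. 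The structural fact I must respect is that $h$ is $0$- and $1$-preserving and $1$-separating (its output forces its first argument to be $1$), so every function in $[h]$ preserves $0,1$ and has a variable that is $1$ on all accepting inputs; the construction has to live inside this constraint.

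Next I would fix any $L^*\in NP/poly\setminus P/poly$ (nonempty under the hypothesis), with polynomial non-deterministic circuits $N_n$ over $\{\land,\lor,\neg\}$ and arbitrary deterministic circuits $D_n$. I introduce two control inputs $w,v$ with intended values $1,0$, and define the $n$-th slice of the target language $L$ to be the function computed by the explicit $h$-circuit
\[
F(w,v,x)=\mathrm{sim}_{D_n}(w,v,x)\ \land\ h\!\left(w,\ \textstyle\bigwedge_i x_i,\ v\right),
\]
where $\mathrm{sim}_{D_n}$ is $D_n$ rebuilt from the gadgets above using $w$ for the constant $1$ and $v$ for the constant $0$. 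Because $F$ is literally an $h$-circuit, $L\in[h]\subseteq[G]$ for free. The mask $h(w,\bigwedge_i x_i,v)=w\land(\bigwedge_i x_i\lor\neg v)$ is the crux: at $(w,v)=(1,0)$ it equals $1$ and $F=\chi_{L^*}(x)$; for $w=0$ it is $0$, so $F\equiv 0$; and at $(w,v)=(1,1)$ it collapses $F$ to $\bigwedge_i x_i$, i.e. to the indicator of $x=1^n$, independently of $D_n$ (the all-ones input is forced to $1$ because every $h$-circuit preserves $1$). Restricting to $w=1,v=0$ recovers $L^*$, so $L\notin P/poly$; and since $P/poly$ is basis independent, $L$ has no polynomial deterministic $G$-circuit.

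It remains to give polynomial non-deterministic $h$-circuits for the very same $F$, and here I would exploit that the constant $1$ is free non-deterministically: guess a bit $g$ and conjoin it to the output, so that accepting branches have $g=1$ while $g=0$ branches output $0$. Re-running the gadget simulation with $g$ in place of the $1$-constant and the input $v$ in place of the $0$-constant, but now simulating the \emph{small} circuit $N_n$ and existentially guessing its non-deterministic inputs $z$, yields a circuit $O$ of size $\mathrm{poly}(n)$ whose $\exists$-projection I would check equals $F$ region by region: for $w=1,v=0$ it reduces to $\exists z\,N_n(x,z)=\chi_{L^*}(x)$; for $w=0$ it is identically $0$; and for $w=1,v=1$ the mask pins it to $\bigwedge_i x_i$, while at $x=1^n$ setting all guesses to $1$ drives every wire to $1$, so the branch accepts. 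Converting $O$ to a $G$-circuit gives polynomial non-deterministic $G$-circuits for $L$, completing the separation.

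I expect the main obstacle to be exactly the tension the mask resolves. Membership $L\in[h]$ demands an explicit, hence possibly enormous, $h$-circuit, whose off-design behavior at $(w,v)\neq(1,0)$ is in general as complex as $D_n$ and would destroy the polynomial non-deterministic upper bound; forcing that behavior down to the trivial function $\bigwedge_i x_i$ through the single gate $h(w,\bigwedge_i x_i,v)$ --- which simultaneously supplies the $1$-separating trigger $w$ and repairs $1$-preservation at the all-ones input --- is what lets the huge deterministic realization and the small non-deterministic one compute the identical function. A secondary point I would verify carefully is that no adversarial guess causes spurious acceptance in the region $v=1$, where no $0$-constant is available; this is automatic because the mask has already forced the output to $\bigwedge_i x_i$ there.
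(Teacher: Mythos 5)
Your proof is correct and follows essentially the same route as the paper's: pad a hard language $L^*\in NP/poly\setminus P/poly$ with two control inputs that stand in for the constants $1$ and $0$, and mask the simulated circuit so that the padded function lies in $[G]$ while its restriction to the intended control values recovers $L^*$. The only difference is cosmetic: the paper's mask $x'\land(C\lor x'')$ makes the off-design region $(x',x'')=(1,1)$ accept everything outright, whereas your mask $h(w,\bigwedge_i x_i,v)$ pins that region to $\bigwedge_i x_i$ and then invokes the $1$-preservation of $h$ to reconcile the large deterministic and small non-deterministic realizations.
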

\begin{proof}
By Post's lattice, $[G,0,1] = [\land,\lor,\neg]$.
Hence, any circuit (over any finite basis) can be converted into an equivalent $\{G,0,1\}$ circuit $C$, without blowing up the size more than a constant factor. Note that $x \land (y \lor z) \in [G]$, and consider the $G$-circuit $C' = x' \land (C \lor x'')$, where all $1$'s and $0$'s in $C$ have been replaced by $x'$ and $x''$ respectively. The transformation can be carried out both for deterministic and non-deterministic circuits.

Given $L$, having polynomial non-deterministic complexity and super polynomial deterministic complexity over the full basis (such an $L$ exist under the assumption that $P/poly \neq NP/poly$). Consider, $L' = \{(x,1,0) \mid x \in L\} \cup \{(x,1,1) \mid x \in \{0,1\}^*\}$. If $C_n(x,y)$ is a family of non-deterministic circuits deciding $L$, then $C_n'(x,x',x'',y)$ (as defined above) is a family of non-deterministic circuits over the basis $[G]$ deciding $L'$. Hence, $L'$ has polynomial non-deterministic complexity over the basis $[G]$. Assume towards contradiction that $L'$ has polynomial deterministic complexity over the basis $[G]$, i.e., there is a polynomial size circuit family $C'_n(x,x',x'')$ deciding $L'$. Then, $C'_n(x,1,0)$ is a polynomial size circuit family over the full basis $[\land,\lor,\neg]$ deciding $L$, contradicting that $L$ has super polynomial deterministic circuit complexity.
\end{proof}

\begin{proposition}
Any $G$ such that $\{x \lor (y \land \neg z)\} \subseteq [G]$ has full non-deterministic power.
\end{proposition}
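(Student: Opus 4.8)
The gate $x \lor (y \land \neg z)$ is exactly the dual of the gate $x \land (y \lor \neg z)$ from the previous proposition, so the plan is to dualize that construction: the same hard language $L$ over the full basis is reused, while the wrapper gate, the algebraic identity behind it, and the padding language $L'$ are each replaced by their duals. Two gate-specific facts drive the argument, and I would establish them first. Write $g'(x,y,z) = x \lor (y \land \neg z)$, so that $\{g'\} \subseteq [G]$ by hypothesis.

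First I would check that $[G,0,1] = [\land,\lor,\neg]$. Since $g'(0,1,z) = \neg z$ and $g'(x,y,0) = x \lor y$, once the constants $0,1$ are available the basis already produces $\neg$ and $\lor$, hence (by De Morgan) every Boolean function; the reverse inclusion is trivial, and the identification can also be read directly off Post's lattice. As in the previous proof, it follows that any circuit over any finite basis converts into an equivalent $\{G,0,1\}$-circuit with only constant-factor blow-up, for both deterministic and non-deterministic circuits.

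Second, and this is the only step I expect to be non-routine, I would show that the monotone gate $x \lor (y \land z)$ lies in $[G]$, i.e. that the negation in $g'$ can be removed using $G$-gates alone, without invoking the constants (which $[G]$ need not contain). The key is the identity $a \lor (b \land c) = g'(a,b,g'(a,b,c))$: applying $g'$ twice with the same first two arguments collapses the complemented third input by absorption, and the short verification $a \lor \bigl(b \land \neg(a \lor (b \land \neg c))\bigr) = a \lor (b \land c)$ confirms it. This is precisely the dual of the identity $a \land (b \lor c) = g(a,b,g(a,b,c))$ that underlies the assertion ``$x \land (y \lor z) \in [G]$'' in the previous proof.

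With these facts in hand the remainder is routine. Let $L$ have polynomial non-deterministic and super-polynomial deterministic complexity over the full basis (such $L$ exists under $P/poly \neq NP/poly$), and convert its non-deterministic circuits $C_n(x,y)$ to the $\{G,0,1\}$-basis. Using the monotone gate above, form $C_n' = x' \lor (C_n \land x'')$, replacing every $1$-gate of $C_n$ by $x''$ and every $0$-gate by $x'$; then the setting $(x',x'') = (0,1)$ reproduces $C_n$ exactly, while $x' = 1$ forces the output to be $1$ regardless of the corrupted interior. Accordingly I would take $L' = \{(x,0,1) \mid x \in L\} \cup \{(x,1,x'') \mid x \in \{0,1\}^*,\ x'' \in \{0,1\}\}$, the sign pattern being flipped relative to the primal case because the outer gate is now an $\lor$. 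The family $C_n'$ is then a polynomial-size non-deterministic $G$-circuit family for $L'$; and if $L'$ had polynomial deterministic $G$-circuits $C_n'(x,x',x'')$, then $C_n'(x,0,1)$ would be a polynomial deterministic full-basis family for $L$, contradicting the choice of $L$. Hence $G$ has full non-deterministic power.
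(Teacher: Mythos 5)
Your proof is correct and follows essentially the same route as the paper: convert to the $\{G,0,1\}$ basis, wrap with $x' \lor (C \land x'')$ using $x \lor (y \land z) \in [G]$, and pad $L$ to $L' = \{(x,0,1) \mid x \in L\} \cup \{(x,1,x'') \mid x \in \{0,1\}^*,\ x'' \in \{0,1\}\}$. The only difference is that you explicitly verify the two clone-theoretic facts (via $g'(0,1,z)=\neg z$, $g'(x,y,0)=x\lor y$, and the identity $a \lor (b \land c) = g'(a,b,g'(a,b,c))$) that the paper simply cites from Post's lattice; both identities check out.
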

\begin{proof}
By Post's lattice, $[G,0,1] = [\land,\lor,\neg]$.
Hence, any circuit (over any finite basis) can be converted into an equivalent $\{G,0,1\}$ circuit $C$, without blowing up the size more than a constant factor. Note that $x \lor (y \land z) \in [G]$, and consider the $G$-circuit $C' = x' \lor (C \land x'')$, where all $0$'s and $1$'s in $C$ have been replaced by $x'$ and $x''$ respectively. The transformation can be carried out both for deterministic and non-deterministic circuits.

Given $L$, having polynomial non-deterministic complexity and super polynomial deterministic complexity over the full basis.
Consider, $L' = \{(x,0,1) \mid x \in L\} \cup \{(x,1,x'') \mid x \in \{0,1\}^* \; and \; x'' \in \{0,1\}\}$. If $C_n(x,y)$ is a family of non-deterministic circuits deciding $L$, then $C_n'(x,x',x'',y)$ is a family of non-deterministic circuits over the basis $[G]$ deciding $L'$. Hence, $L'$ has polynomial non-deterministic complexity over the basis $[G]$. Assume towards contradiction that $L'$ has polynomial deterministic complexity over the basis $[G]$, i.e., there is a polynomial size circuit family $C'_n(x,x',x'')$ deciding $L'$. Then, $C'_n(x,0,1)$ is a polynomial size circuit family over the full basis $[\land,\lor,\neg]$ deciding $L$, contradicting that $L$ has super polynomial deterministic circuit complexity.
\end{proof}

\begin{proposition}
If $G$ is self-dual, monotone, or linear, then $G$ lack non-deterministic power. All other $G$ have full non-deterministic power.
\end{proposition}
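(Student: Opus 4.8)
The statement splits into two implications. The first---if $G$ is self-dual, monotone, or linear then $G$ lacks non-deterministic power---is exactly the content of the three preceding propositions, so nothing new is needed there. The whole task is the converse: every finite $G$ that is \emph{not} self-dual, monotone, or linear has full non-deterministic power. The plan is to reduce this to the two ``full power'' propositions by exhibiting one of the two gates $h_1 = x \land (y \lor \neg z)$ or $h_2 = x \lor (y \land \neg z)$ inside $[G]$, and to obtain that membership from the geometry of Post's lattice.

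First I would record the reduction. By Proposition~\ref{prop:clone} and its corollary, whether $G$ has full non-deterministic power depends only on the clone $[G]$, so I may argue entirely at the level of $[G]$. The two full-power propositions say that $h_1 \in [G]$ or $h_2 \in [G]$ already forces full non-deterministic power. Hence it suffices to prove the purely clone-theoretic statement: if $[G] \not\subseteq [d]$, $[G] \not\subseteq [\land,\lor,0,1]$, and $[G] \not\subseteq [\oplus,1]$, then $h_1 \in [G]$ or $h_2 \in [G]$. Writing $D$, $M$, $L$ for the self-dual, monotone, and linear clones, call a clone \emph{strong} if it is contained in none of $D, M, L$, and \emph{weak} otherwise. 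Because the property ``$C \subseteq M$'' (and likewise for $D$, $L$) is inherited by subclones, the weak clones form a down-set and the strong clones form an up-set (filter) in Post's lattice; so it is enough to show that every \emph{minimal} strong clone contains $h_1$ or $h_2$.

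The heart of the argument is then to locate the minimal strong clones in Post's lattice. A direct computation places both candidate gates in the right region: $h_1$ and $h_2$ each preserve the constants $0$ and $1$, so they lie in $R_2 = T_0 \cap T_1$, yet neither is monotone, self-dual, nor affine; moreover $h_1$ is $1$-separating through its first coordinate (every input it maps to $1$ has $x=1$) and, dually, $h_2$ is $0$-separating, so $h_1$ lies in the clone $S_1$ of $1$-separating functions and $h_2$ in the clone $S_0$ of $0$-separating functions. I would then invoke Post's classification to conclude that $[h_1]=S_1\cap R_2$ and $[h_2]=S_0\cap R_2$ are precisely the two lower covers of $R_2$ that escape $M$, $D$, and $L$, and that these are the only two atoms of the strong up-set. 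Concretely, the five coatoms of the lattice are $T_0,T_1,M,D,L$; since $M,D,L$ are weak, any strong clone is either $\mathrm{BF}$ or sits below $T_0$ and/or $T_1$, and descending through the separating-function families $S_0^k, S_1^k$ and their intersections with $R_2$ one reaches exactly $S_1\cap R_2$ and $S_0\cap R_2$ before crossing into the monotone/affine/self-dual part. A quick sanity check that the two atoms are non-trivial strong clones is that $h_1(x,y,x)=x\land y$ and, dually, $h_2(x,y,x)=x\lor y$, so they properly contain the conjunction and disjunction clones. By the duality that swaps $h_1\leftrightarrow h_2$ and $T_0\leftrightarrow T_1$ while fixing $M,D,L$, it is enough to carry out this descent on one side.

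The main obstacle is exactly this last, lattice-theoretic step: verifying that $S_1\cap R_2$ and $S_0\cap R_2$ are the \emph{only} atoms of the strong region, i.e.\ that no further strong clone slips in incomparable to both. The delicate points are the infinite chains $S_0^k$ and $S_1^k$ together with their various intersections with $R_0,R_1,R_2$ and with $M$; one must check that every strong member of these families lies above $S_1\cap R_2$ or $S_0\cap R_2$, and that genuinely mixed clones such as $S_0\cap S_1$ collapse to the projections (hence are weak). Once this bookkeeping is complete, every strong clone contains $h_1$ or $h_2$, the two full-power propositions apply, and the classification follows.
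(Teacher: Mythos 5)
Your proposal is correct and follows the paper's proof exactly: both reduce the first half to the three preceding "lacks power" propositions and the second half to the Post-lattice fact that every clone not contained in the monotone, self-dual, or linear coatoms must contain $x \land (y \lor \neg z)$ or $x \lor (y \land \neg z)$, i.e., must contain $S_{12} = S_1 \cap R_2$ or $S_{02} = S_0 \cap R_2$, after which the two "full power" propositions apply. The only slip is cosmetic --- these two clones are the minimal elements of your "strong" up-set but are not lower covers of $R_2$ (those are $S_{12}^2$ and $S_{02}^2$) --- and your elaboration of the lattice inspection is otherwise more detailed than the paper's one-line appeal to it.
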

\begin{proof}
By the results above and inspection of Post's lattice.
In particular, if $G$ is neither self-dual, monotone, nor linear, then by Post's lattice $\{x \land (y \lor \neg z)\} \subseteq [G]$ or $\{x \lor (y \land \neg z)\} \subseteq [G]$. 
\end{proof}

\begin{corollary}
Any $G$ such that $\{0,1\} \subseteq [G]$ and $[G] \neq [\land,\lor,\neg]$, lacks non-deterministic power.
\end{corollary}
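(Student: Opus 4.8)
The plan is to reduce this to the dichotomy just established, namely the proposition stating that $G$ lacks non-deterministic power precisely when $G$ is self-dual, monotone, or linear. Given that dichotomy, it suffices to show that the two hypotheses $\{0,1\} \subseteq [G]$ and $[G] \neq [\land,\lor,\neg]$ together force $[G]$ to be contained in $[\land,\lor,0,1]$ or in $[\oplus,1]$; in other words, that $G$ must be monotone or linear. The self-dual case will turn out to be excluded automatically, since self-dual clones contain no constants.

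First I would invoke Post's completeness criterion (a standard consequence of Post's lattice): a finite set of gates generates all of $[\land,\lor,\neg]$ if and only if it is contained in none of the five maximal clones --- the $0$-preserving clone $T_0$, the $1$-preserving clone $T_1$, the monotone clone $M = [\land,\lor,0,1]$, the self-dual clone $D = [d(x_1,x_2,x_3)]$, and the linear clone $L = [\oplus,1]$. Taking the contrapositive, since $[G] \neq [\land,\lor,\neg]$ by assumption, $[G]$ must be contained in at least one of these five maximal clones.

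Next I would use the hypothesis $\{0,1\} \subseteq [G]$ to rule out three of the five possibilities. The constant $1$ is not $0$-preserving, so it lies outside $T_0$; the constant $0$ is not $1$-preserving, so it lies outside $T_1$; and neither constant is self-dual (for instance $0(\overline{x_1},\dots,\overline{x_n}) = 0 \neq 1 = \overline{0(x_1,\dots,x_n)}$), so neither lies in $D$. Since $\{0,1\} \subseteq [G]$, the clone $[G]$ cannot be contained in any clone that omits one of the two constants; hence $[G] \not\subseteq T_0$, $[G] \not\subseteq T_1$, and $[G] \not\subseteq D$. Combined with the previous step, this leaves $[G] \subseteq M$ or $[G] \subseteq L$, i.e., $G$ is monotone or linear, and the conclusion then follows immediately from the dichotomy proposition.

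I expect the only real subtlety --- and hence the main obstacle --- to be the correct identification of which of the five maximal clones contain both constants, together with the (standard but essential) appeal to Post's completeness criterion guaranteeing that a non-complete base lies below some maximal clone. Everything else is bookkeeping: once the base is pinned to $M$ or $L$, the result is a direct citation of the established dichotomy.
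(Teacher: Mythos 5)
Your proof is correct and follows the same route as the paper, whose own proof is just ``by the results above and inspection of Post's lattice''; you have simply made the inspection explicit via Post's completeness criterion and correctly ruled out $T_0$, $T_1$, and $D$ using the presence of both constants, leaving the monotone and linear cases covered by the dichotomy proposition.
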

\begin{proof}
By the results above and inspection of Post's lattice.
\end{proof}

\section{Final Remarks}
Coming back to our original motivation for studying non-deterministic versus deterministic $G$-circuit complexity. We note that $G = \{x \land (y \lor \neg z)\}$ and $G = \{x \lor (y \land \neg z)\}$ are the two weakest bases for which it is possible that non-deterministic $G$-circuits are more powerful than deterministic $G$-circuits (indeed, this is the case assuming $P/poly \neq NP/poly$). Unfortunately, as the following two propositions show, it is probably somewhat challenging to prove super polynomial lower bounds for deterministic $G$-circuits (with $G = \{x \land (y \lor \neg z)\}$ or $G = \{x \lor (y \land \neg z)\}$), since this implies that $P/poly \neq NP/poly$.

\begin{proposition}
Let $L \in G = [x \land (y \lor \neg z)]$ with deterministic circuit complexity $s(n)$ over the full basis $[\land, \lor, \neg]$. Then, $L$ has $O(s(n))$ deterministic $G$-circuit complexity.
\end{proposition}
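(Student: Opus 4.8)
The plan is to convert a size-$s(n)$ full-basis circuit for $L$ into a $G$-circuit of size $O(s(n))$ in three stages: first introduce constants freely, then collapse them onto two shared input-dependent wires, and finally \emph{overwrite} the $O(1)$ inputs on which that replacement can go wrong. The point of the hypothesis $L\in[G]$ is not merely that some $G$-circuit exists, but that it supplies the structural fact (preservation of $0$ and $1$) that makes constant-elimination possible.

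First I would record two facts about the basis gate, which I write as $h(x,y,z)=x\wedge(y\vee\neg z)$. By Post's lattice $[G,0,1]=[\land,\lor,\neg]$, so exactly as in the proof of Proposition~\ref{prop:clone} any full-basis circuit $C$ for $f_n$ of size $s(n)$ can be rewritten as an equivalent $\{G,0,1\}$-circuit of size $O(s(n))$, replacing each $\land,\lor,\neg$ gate by a constant-size $\{G,0,1\}$-gadget. Second, since $h$ preserves both $0$ and $1$, so does every function computed by a constant-free $G$-circuit; hence every $f_n\in[G]$ satisfies $f_n(0,\dots,0)=0$ and $f_n(1,\dots,1)=1$. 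I would also note that $\land$ and $\lor$ lie in $[G]$ and admit constant-free gadgets — concretely $\land(x,y)=h(x,y,x)$, and $\lor\in[G]$ by Post's lattice (the clone $[G]$ being exactly the $0$- and $1$-preserving functions).

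Next I would eliminate the constants. Route every constant-$1$ gate of the $\{G,0,1\}$-circuit to a single fresh wire $p$ and every constant-$0$ gate to a single fresh wire $q$; this yields a genuine $G$-circuit $\tilde D(x,p,q)$ with $\tilde D(x,1,0)=f_n(x)$ for all $x$. Let $S$ be the set of input variables actually feeding $C$; since $C$ has $O(s(n))$ bounded-fan-in gates, $|S|=O(s(n))$. I would then substitute the \emph{non-constant} $G$-circuits $p(x)=\bigvee_{i\in S}x_i$ and $q(x)=\bigwedge_{i\in S}x_i$, and output
\[
E(x)=\bigl(\tilde D(x,p(x),q(x))\vee q(x)\bigr)\wedge p(x).
\]
Everything here is built from $G$-gates with no constants, and $E$ has size $O(s(n))+O(|S|)=O(s(n))$.

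Finally I would verify $E=f_n$ by splitting on $x$. When $x|_S\notin\{0^S,1^S\}$ we have $p(x)=1$ and $q(x)=0$, so $E(x)=\tilde D(x,1,0)=f_n(x)$; when $x|_S=1^S$ the factors $\vee q$ and $\wedge p$ force $E(x)=1$; and when $x|_S=0^S$ the factor $\wedge p$ forces $E(x)=0$. Since $f_n$ ignores the variables outside $S$ and preserves $0$ and $1$, these two forced values are exactly $f_n(x)$. The step that needs real care — and the main obstacle — is precisely this elimination of the two shared constants: no fixed substitution for $1$ and $0$ can be correct on all inputs, so the argument must use $0/1$-preservation to confine the error to $x|_S\in\{0^S,1^S\}$ and then overwrite those values monotonically. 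Once that is isolated, the remaining bookkeeping (the constant-factor gadget counts and the bound $|S|=O(s(n))$) is routine.
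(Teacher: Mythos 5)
Your construction breaks at the point where you assert that $\lor \in [G]$ on the grounds that \enquote{the clone $[G]$ is exactly the $0$- and $1$-preserving functions.} That identification of the clone is wrong: by Post's lattice, $[x \land (y \lor \neg z)]$ is the clone of \emph{$1$-separating} functions, i.e.\ functions $f$ for which there is a single coordinate $i$ with $a_i = 1$ for every $a \in f^{-1}(1)$. (One can check directly that the $1$-separating functions are closed under composition and contain $h$, so every function computed by a constant-free $G$-circuit is $1$-separating.) The disjunction $x \lor y$ is not $1$-separating --- it takes the value $1$ on both $(1,0)$ and $(0,1)$ --- so $\lor \notin [G]$, and $[G]$ is strictly smaller than the clone of $0$- and $1$-preserving functions. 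Consequently neither $p(x) = \bigvee_{i \in S} x_i$ nor the top-level combination $(\tilde D \lor q) \land p$ can be realized by $G$-gates, and the circuit $E$ you describe is not a $G$-circuit. This is not a bookkeeping slip: the reason the proposition assumes $L \in [G]$ is precisely the $1$-separating structure, not mere $0/1$-preservation.

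The paper's proof is organized to dodge exactly this obstacle. It never introduces the constant $0$ at all: negation is simulated by $x \oplus y \oplus 1$ gates whose second input is $x_1 \land \dots \land x_n$, so the only input on which the simulation misbehaves is the all-ones vector, where $1$-reproducibility of the basis already forces the (correct) output $1$. This yields a circuit over $\{G,1\}$ --- and there the use of $\lor$ is legitimate, since $[\land,\lor,x\oplus y\oplus 1]=[G,1]$. The one remaining constant $1$ is then replaced not by $\bigvee_{i} x_i$ but by the specific separating coordinate $x_i$ of $f_n$ guaranteed by $L \in [G]$, and the whole circuit is conjoined with $x_i$ (note $x_i \land u = h(x_i,u,x_i)$ is available): when $x_i = 1$ the substitution is exact, and when $x_i = 0$ separation forces $f_n(x)=0$, which the outer conjunction delivers. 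If you want to keep a two-wire scheme you would have to restructure it along these lines; as written, the constant-elimination step you yourself flag as \enquote{the step that needs real care} is the one that fails.
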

\begin{proof}
First note that $G$ is $1$-reproducing, which means that all constant $1$ vectors are in $L$ (i.e., $(1,\dots,1) \in L$). We first show that such an $L$ can be computed by $\{\land, \lor, x \oplus y \oplus 1\}$ circuits of size $O(s(n))$ if it can be computed by $[\land, \lor, \neg]$ circuits of size $s(n)$. Hence, it suffices to show how to simulate $\neg$ gates with $x \oplus y \oplus 1$ gates. Given a $\{\land, \lor,\neg\}$ circuit for $L_n$ with inputs $x_1,\dots, x_n$ we construct an equivalent circuit $C_n$ by replacing all $\neg$ gates with $x \oplus y \oplus 1$ gates, where the first input is the original input to the $\neg$ gate and the second input is the output of $x_1 \land \dots \land x_n$.

Observe that since $\{\land, \lor, x \oplus y \oplus 1\}$ is $1$ reproducing, $C_n(1,\dots,1) = 1$ (as it should, since $(1,\dots,1) \in L_n$). For every $(x_1,\dots,x_n) \neq (1, \dots, 1)$, the $x \oplus y \oplus 1$ gates works like a $\neg$ gates on its first input (since $x_1 \land \dots \land x_n = 0$). 

By Post's lattice we know that $[\land, \lor, x \oplus y \oplus 1] = [G,1]$, and hence $L$ has circuit complexity $O(s(n))$ over $\{G,1\}$, via a circuit family $C_n$.
The base $G = [x \land (y \lor \neg z)]$ is $1$-separating, which means that for any $G$-circuit $C(x_1,\dots,x_n)$ there is an $1 \leq i \leq n$, such that $x_i = 1$ for all $\{(x_1,\dots,x_n) \mid C(x_1,\dots,x_n) = 1\}$.
Since $L \in G$ and $G$ is $1$-separating, there is for each $L_n$, an $1 \leq i \leq n$ such that $x_i = 1$ for all $(x_1,\dots,x_n) \in L_n$.
Hence, $C'_n = x_i \land C_n$ (replacing all occurrences of $1$ in $C_n$ by $x_i$) is a $G$-circuit family of size $O(s(n))$ deciding the language $L$. 
\end{proof}

\begin{proposition}
Let $L \in G = [x \lor (y \land \neg z)]$ with deterministic circuit complexity $s(n)$ over the full basis $[\land, \lor, \neg]$. Then, $L$ has $O(s(n))$ deterministic $G$-circuit complexity.
\end{proposition}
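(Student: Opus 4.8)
The plan is to run the proof of the previous proposition in its dual form, systematically exchanging $1\leftrightarrow 0$, $\land\leftrightarrow\lor$, the constant $1\leftrightarrow 0$, the gate $x\oplus y\oplus 1\leftrightarrow x\oplus y$, and the properties ``$1$-reproducing / $1$-separating'' by ``$0$-reproducing / $0$-separating''. The starting observation is that $G=[x\lor(y\land\neg z)]$ is $0$-reproducing (indeed $0\lor(0\land\neg 0)=0$), so every $f\in[G]$ satisfies $f(0,\dots,0)=0$; in particular $(0,\dots,0)\notin L$.

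First I would show that $L$ can be computed by $\{\land,\lor,x\oplus y\}$-circuits of size $O(s(n))$. Starting from a $\{\land,\lor,\neg\}$-circuit of size $s(n)$ for $L_n$, I would compute the disjunction $x_1\lor\dots\lor x_n$ once (an $\lor$-tree of size $O(n)$) and replace every $\neg$ gate by an $x\oplus y$ gate whose first input is the original input of the $\neg$ gate and whose second input is $x_1\lor\dots\lor x_n$; call the result $C_n$. Since $\{\land,\lor,x\oplus y\}$ is $0$-reproducing we get $C_n(0,\dots,0)=0$, which is correct as $(0,\dots,0)\notin L_n$; and for every $(x_1,\dots,x_n)\neq(0,\dots,0)$ we have $x_1\lor\dots\lor x_n=1$, so each $x\oplus y$ gate computes $a\oplus 1=\neg a$ on its first input $a$ and $C_n$ agrees with the original circuit. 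Hence $C_n$ decides $L_n$ with size $O(s(n))$.

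Next, by the dual of the Post's lattice identity used before, $[\land,\lor,x\oplus y]=[G,0]$, so $C_n$ may be regarded as a circuit over $\{G,0\}$. To remove the constant $0$ I would use that $G=[x\lor(y\land\neg z)]$ is $0$-separating (all of its rejecting inputs satisfy $x=0$): for the function $L_n\in[G]$ there is an index $i$ with $x_i=0$ for every $(x_1,\dots,x_n)\notin L_n$. I would then set $C'_n=x_i\lor C_n$ with every occurrence of the constant $0$ in $C_n$ replaced by $x_i$; since $\lor\in[G]$ (set the two conjuncts of $x\lor(y\land z)\in[G]$ equal), this is a genuine $G$-circuit of size $O(s(n))$.

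The remaining work, and the step to be most careful about, is verifying correctness of this last substitution, since this is where the asymmetry with the primal case surfaces. Writing $\tilde{C}_n$ for $C_n$ with $0$ replaced by $x_i$: on a rejecting input $x_i=0$, so the substitution is harmless, $\tilde{C}_n(x)=C_n(x)=0$ and $C'_n(x)=0\lor 0=0$; on an accepting input with $x_i=1$ the outer gate forces $C'_n(x)=1$; and on an accepting input with $x_i=0$ the substitution is again harmless, so $\tilde{C}_n(x)=C_n(x)=1$ and $C'_n(x)=0\lor 1=1$. Thus $C'_n$ decides $L_n$. I do not expect a real obstacle, only a bookkeeping subtlety: unlike the $1$-separating case, $0$-separating constrains the \emph{rejecting} inputs, so accepting inputs with $x_i=0$ are handled not by the outer $\lor$ gate but by the harmlessness of replacing $0$ with $x_i=0$, and one must keep the roles of the two mechanisms straight.
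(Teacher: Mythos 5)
Your proof is correct and follows essentially the same route as the paper's: replace each $\neg$ gate by an $x\oplus y$ gate whose second input is $x_1\lor\dots\lor x_n$, invoke $[\land,\lor,\oplus]=[G,0]$, and eliminate the constant $0$ via the $0$-separating property and an outer $x_i\lor(\cdot)$. Two minor points in your favor: you correctly note that $0$-reproducing yields $(0,\dots,0)\notin L$ (the paper's text says $(0,\dots,0)\in L$, an apparent slip carried over from the dual case), and you make explicit the case analysis verifying the final substitution, which the paper leaves implicit.
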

\begin{proof}
First note that $G$ is $0$-reproducing, which means that all constant $0$ vectors are in $L$ (i.e., $(0,\dots,0) \in L$). We first show that such an $L$ can be computed by $\{\land, \lor,  \oplus \}$ circuits of size $O(s(n))$ if it can be computed by $[\land, \lor, \neg]$ circuits of size $s(n)$. Hence, it suffices to show how to simulate $\neg$ gates with $ \oplus $ gates. Given a $\{\land,\lor,\neg\}$ circuit for $L_n$ with inputs $x_1,\dots, x_n$ we construct an equivalent circuit $C_n$ by replacing all $\neg$ gates with $\oplus $ gates, where the first input is the original input to the $\neg$ gate and the second input is the output of $x_1 \lor \dots \lor x_n$.

Observe that since $\{\land, \lor, \oplus \}$ is $0$ reproducing, $C_n(0,\dots,0) = 0$ (as it should, since $(0,\dots,0) \in L_n$). For every $(x_1,\dots,x_n) \neq (0, \dots, 0)$, the $\oplus$ gates works like a $\neg$ gates on its first input (since $x_1 \lor \dots \lor x_n = 1$). 

By Post's lattice we know that $[\land, \lor, \oplus ] = [G,0]$, and hence $L$ has circuit complexity $O(s(n))$ over $\{G,0\}$, via a circuit family $C_n$.
The base $G = [x \lor (y \land \neg z)]$ is $0$-separating, which means that for any $G$-circuit $C(x_1,\dots,x_n)$ there is an $1 \leq i \leq n$, such that $x_i = 0$ for all $\{(x_1,\dots,x_n) \mid C(x_1,\dots,x_n) = 0\}$.
Since $L \in G$ and $G$ is $0$-separating, there is for each $L_n$, an $1 \leq i \leq n$ such that $x_i = 0$ for all $(x_1,\dots,x_n) \notin L_n$.
Hence, $C'_n = x_i \lor C_n$ (replacing all occurrences of $0$ in $C_n$ by $x_i$) is a $G$-circuit family of size $O(s(n))$ deciding the language $L$. 
\end{proof}

\bibliographystyle{abbrv}
\bibliography{references}

\end{document}